\newcounter{mycount}[section]
\newtheorem{theorem}[mycount]{Theorem}
\newtheorem{lemma}[mycount]{Lemma}
\newtheorem{definition}[mycount]{Definition}
\newtheorem{corollary}[mycount]{Corollary}
\title{Hyperbolic Minesweeper is in P}
\author{Eryk Kopczy\'nski}
\def\bbH{\mathbb{H}}
\def\area{\mathrm{area}}
\def\perimeter{\mathrm{peri}}
\def\subfig#1#2#3{
\begin{minipage}[b]{#1} \noindent
  \includegraphics[width=\textwidth]{#2}
  \subcaption{#3}
\end{minipage}
}
\begin{document}

\maketitle

\begin{abstract}
We show that, while Minesweeper is NP-complete, its hyperbolic variant is in P. 
Our proof does not rely on the rules of Minesweeper, but is valid for any puzzle
based on satisfying local constraints on a graph embedded in the hyperbolic plane.
\end{abstract}

\section{Introduction}

\begin{figure}[ht]
\begin{center}
\subfig{0.48\linewidth}{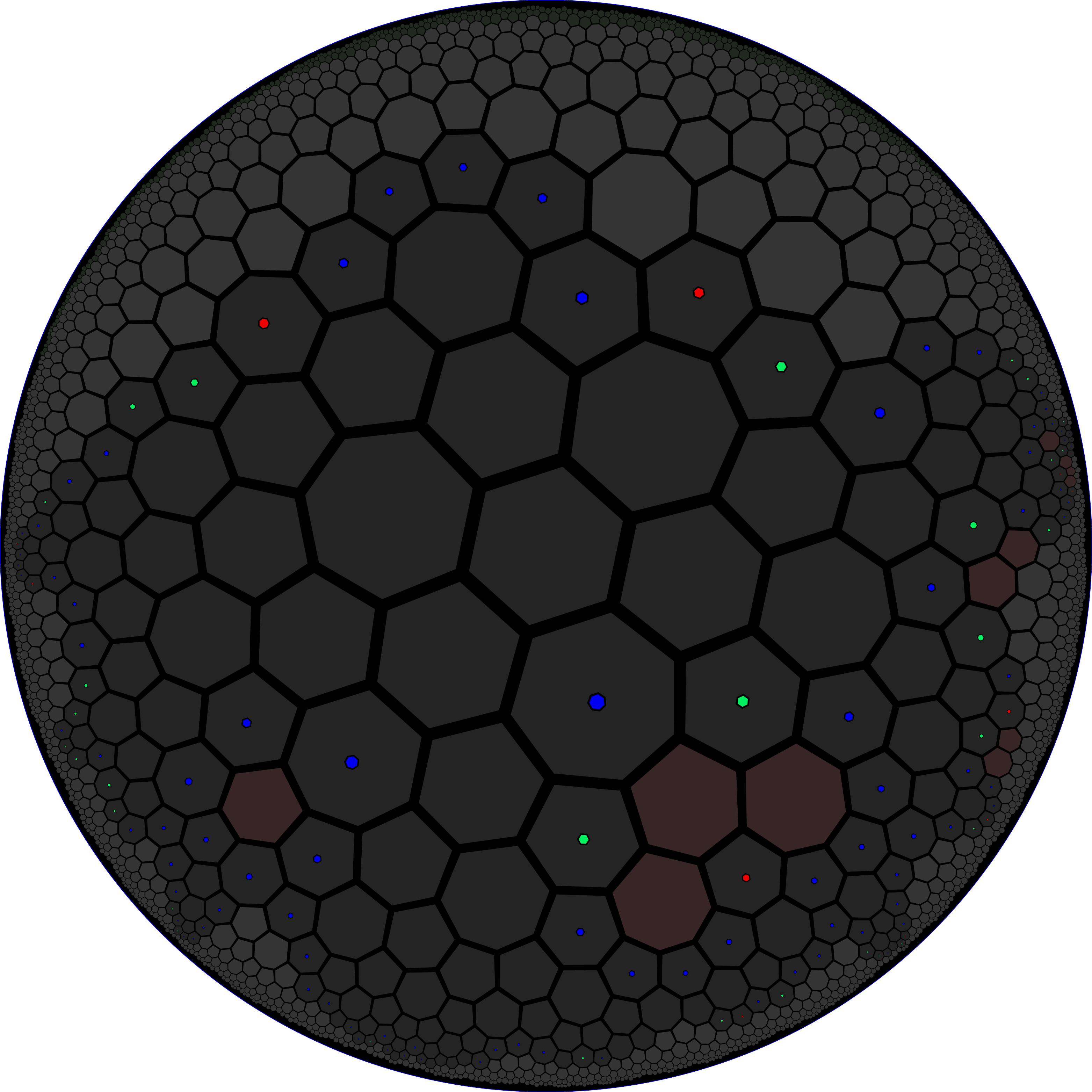}{graphical mode, bitruncated tessellation}
\subfig{0.48\linewidth}{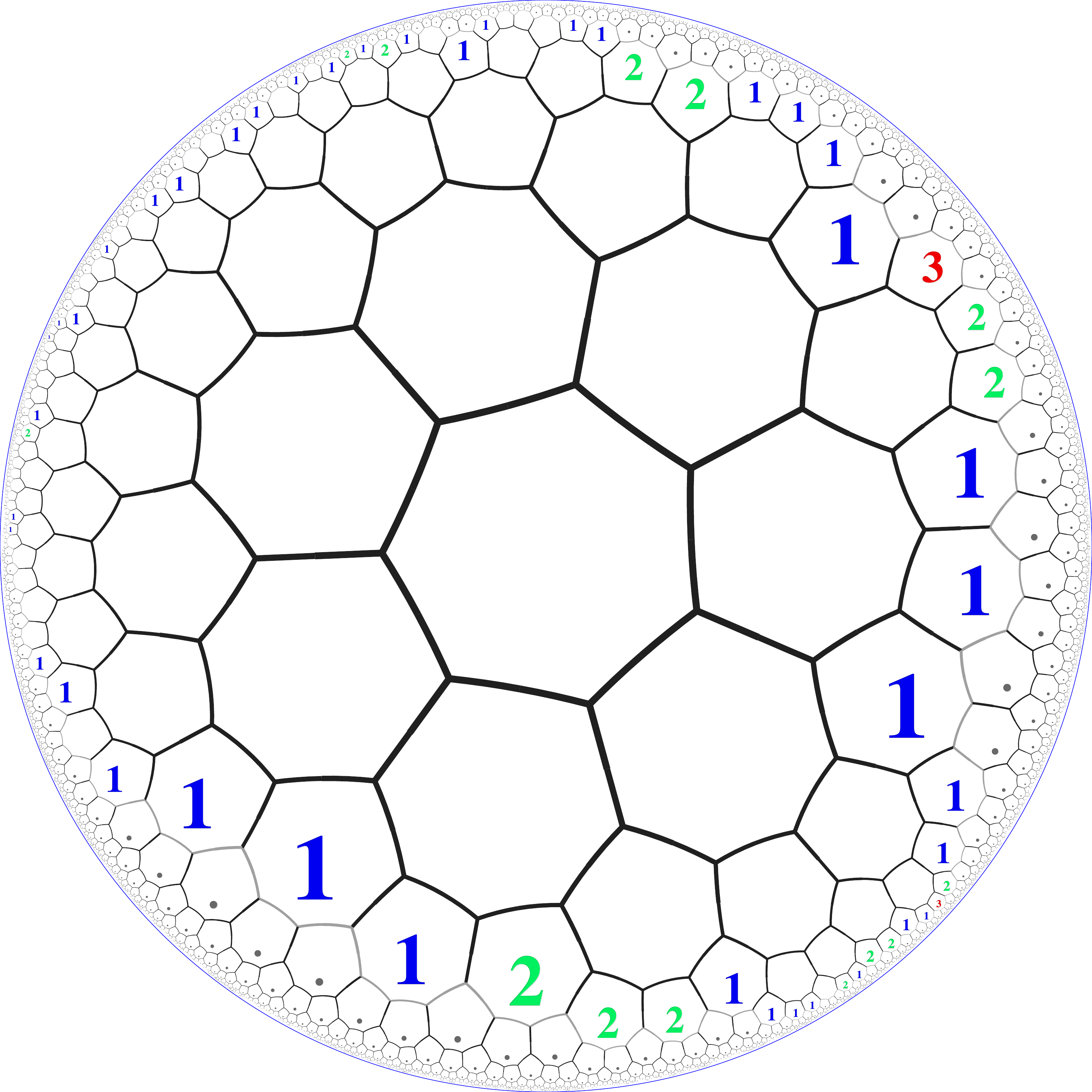}{text mode, regular tessellation}
\end{center}
\caption{\label{hrhex} Hyperbolic Minesweeper \cite{hyperrogue}. In (a), the default settings are used
(bitruncated order-3 heptagonal tessellation, numbers of adjacent mines are color-coded;
some of the mines that the players is sure of are marked red).
In (b) we play on an order-3 heptagonal tessellation, and numbers are shown.}
\end{figure}

Minesweeper is a popular game included with many computer systems; it also exists in the puzzle form. In the puzzle form,
every cell in a square grid either contains a number or is empty. The goal of the puzzle is to assign mines to the
empty squares in such a way that every number $n$ is adjacent (orthogonally or diagonally) to exactly $n$ mines.

This puzzle is a well-known example of a NP-complete problem \cite{minenp}. Its popularity has also spawned many
variants played on different grids, from changing the tessellation of the plane, to changing the number of dimensions
(six-dimensional implementations of Minesweeper exist) to changing the underlying geometry.

In this paper we will be playing Minesweeper on a tessellation of the hyperbolic plane. Minefield, a land based on hyperbolic
Minesweeper is available in HyperRogue \cite{hyperrogue}; this implementation differs from the standard Minesweeper in multiple ways,
e.g., by being played on an infinite board, however, the basic idea is the same. Another implementation is Warped Mines
for iOS \cite{hypersweeper}, which has the same rules as the usual Minesweeper except the board, which is a bounded subset of
the order-3 heptagonal tiling of the hyperbolic plane.

From the point of view of a computer scientist,
the most important distinctive property of hyperbolic geometry is exponential growth: the area of a circle
of radius $r$ grows exponentially with $r$. It is also more difficult to understand than Euclidean geometry.
While these properties often cause computational geometry problems to be more difficult, it also gives hyperbolic geometry
applications in data visualization \cite{lampingrao,munzner} and data analysis \cite{papa}. 

In this paper we show that hyperbolic geometry makes Minesweeper easier: 
the hyperbolic variant
of Minesweeper is in P. Our proof will not rely on the specific rules of Minesweeper nor work with any specific 
tessellation of the hyperbolic plane; instead, it will work with any puzzle based on satisfying local constraints,
on any graph that naturally embeds into the hyperbolic plane.

\section{Hyperbolic Geometry}
We denote the hyperbolic plane with $\bbH^2$. 
Since this paper requires only the basic understanding of hyperbolic geometry, we will not include the formal definition;
see \cite{cannon}, or \cite{hyperrogue} for an intuitive introduction.
Figures \ref{hrhex} and \ref{egrowth} shows the hyperbolic plane in the Poincar\'e disk model, which is a projection which
represents angles faithfully, but not the distances: the scale gets smaller and smaller as we get closer to the circle
bounding the model. In particular, all the heptagons in Figure \ref{hrhex}b are the same size, all the heptagons in 
Figure \ref{hrhex}a are the same size, and all the hexagons in Figure \ref{hrhex}a are the same size.
We denote the distance between two points $x,y \in \bbH^2$ by $\delta(x, y)$.
The set of points in distance at most $r$ from $x$ is denoted with $B(x,r)$. The area of $B(x,r)$ is $2 \pi (\cosh r - 1)$, which
we denote with $\area(r)$. The perimeter of $B(x,r)$ is $2\pi\sinh r$, which we denote with $\perimeter(r)$. Note that both
$\area$ and $\perimeter$ grow exponentially: $\area(r) = \Theta(e^r)$ and $\perimeter(r) = \Theta(e^r)$.

\begin{figure}[ht]
\begin{center}
\includegraphics[width=0.48\linewidth]{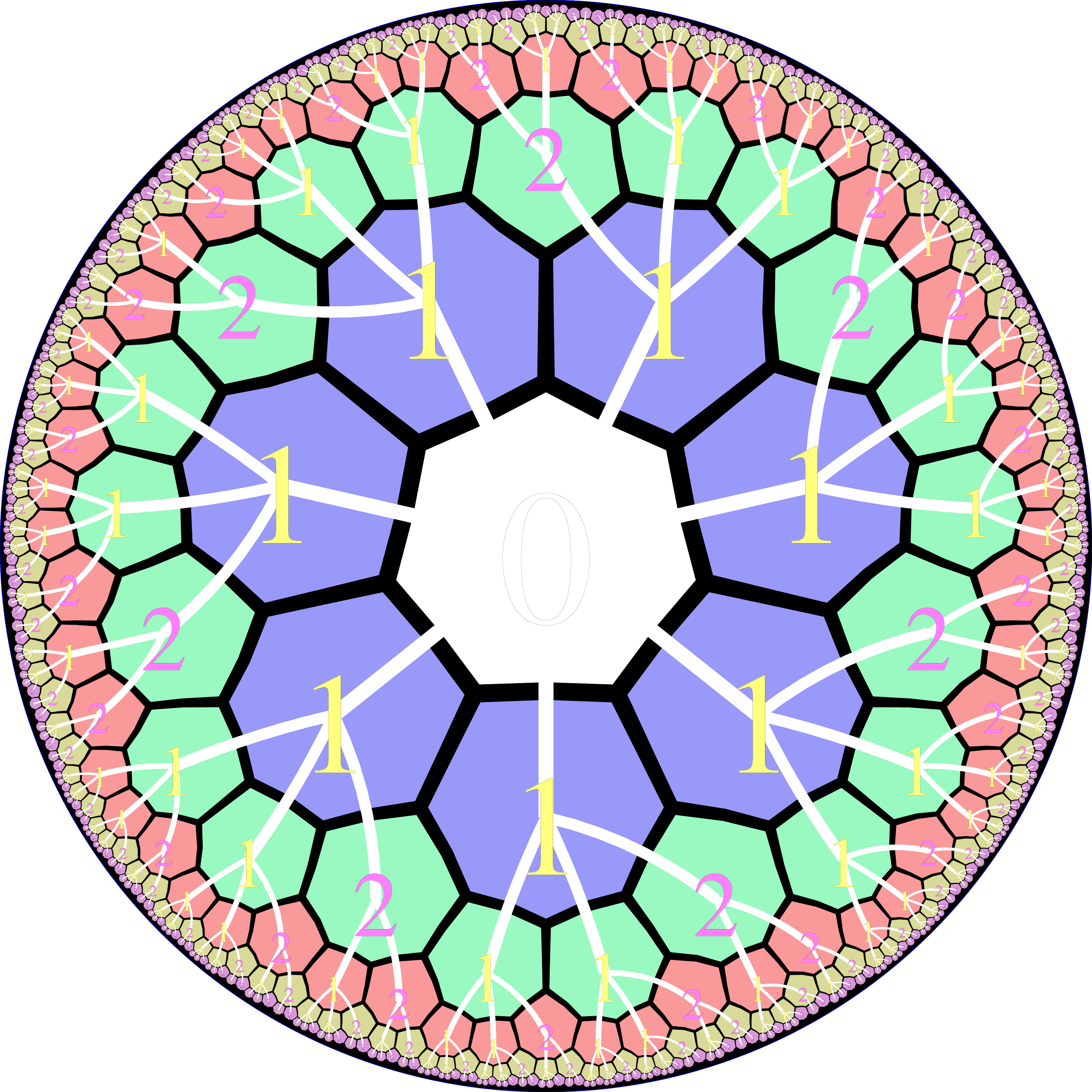}
\end{center}
\caption{\label{egrowth} Exponential growth in the hyperbolic plane.}
\end{figure}

We include Figure \ref{egrowth} as a simple introduction to the structure of the hyperbolic plane and its exponential growth.
This is the order-3 heptagonal tessellation \cite{margenstern_heptagrid}. The numbers correspond
to the number of adjacent tiles which are closer to the center tile. This tessellation can be constructed as follows.
We construct a tree of tiles:
the central tile has seven children of type 1;
each tile of tile 1 has two children of type 1, and one child
of type 2; each tile of tile 2 has one child of type 1 and one child of type 2. To
produce the actual tessellation, we also connect adjacent tiles on the same level, and to the leftmost child of the tile on the right.
The hyperbolic plane can be seen as a continuous version of the graph obtained using this construction. Since every tile has at least
two children, there are exponentially many tiles in $r$ steps from the central tile. 

\section{Hyperbolic Graph}
We need a suitable definition of a hyperbolic graph. The obvious definition, obtained by changing ``plane'' to ``hyperbolic plane'' in
the definition of a planar graph, is equivalent to the definition of the planar graph (the plane and the hyperbolic plane are homeomorphic).
Another definition found in literature is the notion of Gromov $\delta$-hyperbolic graph \cite{gromovhyp}; this definition is based
on the observation that, in the hyperbolic plane, triangles are thin, i.e., for any three vertices $a$, $b$, $c$, every point on
any shortest path from $a$ to $c$  must be in distance at most $\delta$ from either the shortest path from $a$ to $b$, or the shortest
path from $b$ to $c$. The parameter $\delta$ measures tree-likeness of the graph (it can be easily seen that trees are 0-hyperbolic). 
However, this definition is not suitable
for us, because every graph $G$ becomes 2-hyperbolic when we add a vertex $v_*$ connected to every $v \in V(G)$; our main result is not true
for such graphs. We propose the following definition:

\begin{definition}
A $(r,d)$-{\bf hyperbolic graph} is a graph $G = (V,E)$ such that there exists an embedding $m: V \rightarrow \bbH^2$ such that:
\begin{itemize}
\item for $v_1 \neq v_2$, $\delta(m(v_1), m(v_2)) > r$,
\item for $\{v_1, v_2\} \in E$, $\delta(m(v_1), m(v_2)) < d$,
\item if we draw every edge $\{v_1, v_2\} \in E$ as a straight line segment between $m(v_1)$ and $m(v_2)$, these edges do not cross,
nor they get closer to vertices other than $v_1$ or $v_2$ than $r/2$.
\end{itemize}
\end{definition}

Intuitively, $r/2$ is the radius of every vertex; this parameter bounds the density of our graph embedding.
The parameter $d$ gives the maximum distance between two vertices connected with an edge.

This definition includes finite subsets of all regular and semiregular hyperbolic tessellations (such as the ones shown in Figure \ref{hrhex}).
We denote $N(v)$ to be the neighborhood of $v \in V$,
i.e., $\{v\} \cup \{w \in V: \{v,w\} \in E\}$. (Our proof also works with neighborhoods of larger radius.)

{\bf Remark.} All $(r,d)$-hyperbolic graphs have degree bounded by a constant (for fixed $r,d$). Indeed, let $v \in V$.
For every $w \in N(v)$, the circles $B(w,\frac{r}{2})$ are disjoint, and they fit in
$B(v,d+\frac{r}{2})$. Therefore, $|N| \leq \area(d+\frac{r}{2}) / \area(\frac{r}{2})$.

\section{Hyperbolic Local Constraint Satisfaction Problem}

Below we state our main result.

\begin{theorem}\label{hlcsp}
Fix the set of colors $K$ and the parameters $(r,d)$. The following problem (Hyperbolic Local Constraint Satisfaction Problem, HLCSP)
can be solved in polynomial time:

{\bf INPUT:}
\begin{itemize}
\item a $(r,d)$-hyperbolic graph $G = (V,E)$;
\item for every vertex $v \in V$, $m(v)$, a subset of $K^{N(v)}$.
\end{itemize}

{\bf OUTPUT:} Is there a coloring $c: V \rightarrow K$ such that for every $v \in V$, $c _{| N(v)} \in m(v)$?
\end{theorem}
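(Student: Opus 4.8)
The plan is to reduce HLCSP to constraint satisfaction on a graph of small treewidth. Once we know that $G$ admits a tree decomposition of width $w$, together with that decomposition, the textbook bottom-up dynamic program over the bags decides the existence of a valid coloring: at each bag we tabulate, for every coloring of the bag, whether it extends to a valid coloring of the vertices introduced below that bag, which costs $O(n\cdot|K|^{w+1})$ up to polynomial factors. Thus the entire difficulty is concentrated in one structural statement: every $(r,d)$-hyperbolic graph on $n$ vertices has treewidth $O(\log n)$, and such a decomposition is computable in polynomial time. Since $w=O(\log n)$ makes $|K|^{w+1}=n^{O(\log|K|)}$ polynomial, this immediately yields the theorem. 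Throughout I will use the embedding $m\colon V\to\bbH^2$ witnessing $(r,d)$-hyperbolicity, which I take to be supplied with the input, as it is for puzzles played on a fixed tessellation.

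To bound the treewidth I would manufacture balanced separators from the embedding. Fix a geodesic line $\ell$ and set $S_\ell=\{v: \delta(m(v),\ell)\le d\}$. Any edge joining the two open half-planes of $\ell$ must cross $\ell$, and since it has length less than $d$, one endpoint lies within $d$ of $\ell$; hence $S_\ell$ disconnects the two sides. Its size is controlled exactly as in the degree Remark: the disks $B(m(v),r/2)$ for $v\in S_\ell$ are disjoint and lie in a width-$(2d+r)$ strip around $\ell$, whose area grows only linearly in its length (the area element is $\cosh s\,ds\,dt$, so a strip of length $L$ has area $\Theta(L)$). Therefore $|S_\ell|=O(L)$, where $L$ is the length of the portion of $\ell$ meeting the point cloud. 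It remains to find a balanced cut with $L=O(\log n)$. I would take a centerpoint $x_0$ (a point such that every geodesic through it leaves at most $\frac{2}{3}n$ vertices strictly on either side; existence follows from the standard Helly-type argument, valid for geodesic half-planes of $\bbH^2$) and average $|S_{\ell_\theta}|$ over geodesics $\ell_\theta$ through $x_0$. A vertex at distance $\rho$ from $x_0$ satisfies $\delta\le d$ to $\ell_\theta$ for an angular measure $\Theta(e^{d-\rho})$ (from $\sinh\delta=\sinh\rho\sin\alpha$), and the packing bound $|B(x_0,\rho)\cap m(V)|=O(e^{\rho})$ gives $\sum_{v}e^{-\delta(x_0,m(v))}=O(\log n)$ by summing over dyadic shells. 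Hence the average, and so the minimum, of $|S_{\ell_\theta}|$ is $O(\log n)$, producing a $\tfrac23$-balanced separator of logarithmic size.

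I would then assemble the tree decomposition by recursively cutting. The regions are convex (finite intersections of geodesic half-planes); each is split by a geodesic chord into two convex children, and the bag attached to a region is the set of cloud vertices within $d$ of the part of its boundary that faces other regions, together with the newly chosen chord's separator. Each region is handled in polynomial time, because a candidate balanced cut is determined, up to a harmless perturbation, by a pair of embedded vertices, so only $O(n^2)$ geodesics per region need be examined; and the recursion has $O(n)$ nodes.

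The crux, and the place where the construction must be done carefully, is guaranteeing that the bags stay of size $O(\log n)$ rather than merely $O(\log^2 n)$. The generic ``balanced-separator implies treewidth'' bound loses a logarithmic factor, which would yield only a quasipolynomial $n^{O(\log n)}$ algorithm; the point of hyperbolicity is to beat this. The idea is a conservation of boundary length, modeled on the decomposition of a disk into angular sectors: there every region is bounded by two radii, and splitting the angle replaces one discarded radius by a new chord of the same length $R$, so the bag size is conserved at $O(R)=O(\log n)$ along every root-to-leaf path instead of accumulating. The main obstacle is to generalize this to arbitrary convex pieces: to choose each chord so that it simultaneously balances the vertex count (keeping the recursion depth $O(\log n)$) and is no longer than the boundary arc it trades away to the sibling (keeping the per-bag boundary conserved). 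Making this two-sided accounting rigorous, using that a convex region's perimeter and area are comparable under exponential growth so that short balanced chords always exist, is where the real work lies; the dynamic program of the first paragraph then runs in polynomial time.
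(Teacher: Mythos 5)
Your overall strategy --- establish that $(r,d)$-hyperbolic graphs have treewidth $O(\log n)$, compute such a decomposition, and run the standard dynamic program so that $|K|^{O(\log n)}$ is polynomial --- is exactly the paper's, but your route to the structural fact is genuinely different, and it is there that your proof has a real gap, one you yourself flag. Your separator theorem itself is sound: the set $S_\ell$ of vertices within $d$ of a geodesic $\ell$ does separate the two sides, the Fermi-coordinate area bound gives $|S_\ell|=O(L)$, and the centerpoint-plus-averaging computation ($\sum_v e^{-\delta(x_0,m(v))}=O(\log n)$ by dyadic shells and packing) correctly produces a $\tfrac23$-balanced separator of size $O(\log n)$. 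But a balanced separator theorem is not yet a tree decomposition, and your proposed repair --- a bespoke ``conservation of boundary length'' via convex regions and short balanced chords --- is left explicitly unproved (``where the real work lies''). As written the construction only delivers width $O(\log^2 n)$, hence a quasipolynomial algorithm, which does not prove the theorem. The missing step is, however, standard and needs no new geometry: the Reed-style treewidth approximation balances each separator with respect to the current \emph{boundary set} $S$ rather than the whole vertex set, so each child's boundary has size at most $\tfrac23|S|+O(\log n)$ and the bags stay at $O(\log n)$ with no accumulation along root-to-leaf paths. To invoke it you need the separator theorem in subset form --- for every $W\subseteq V$, a separator of size $O(\log n)$ in $G$ that is balanced with respect to $W$ --- and your construction gives exactly that, since taking the centerpoint of $m(W)$ changes only the balance condition, not the size bound (which counts all of $V$ near the geodesic). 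With that substitution your argument closes.

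Two further remarks. First, the paper sidesteps separators entirely: it invokes the algorithmic Robertson--Seymour grid-minor theorem (either find a $t\times t$ grid minor or output a tree decomposition of width $5t-6$) and shows a $(r,d)$-hyperbolic graph has no $t\times t$ grid minor for $t=\omega(\log n)$, because the nested cycles around the grid's center must lie at distances growing linearly in the cycle index and hence have exponentially growing lengths. That route outsources to a black box all the recursive bookkeeping that troubles you; your route is more self-contained and makes the role of exponential growth more transparent, but only once the assembly step is done correctly. Second, you run the dynamic program directly on the vertex constraints over $N(v)$; since $N(v)$ need not lie in a single bag, you should either first reduce to edge constraints with an enlarged color alphabet (as the paper does) or add a clique on each $N(v)$, which increases the treewidth by at most a constant factor because degrees are bounded. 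This is minor, but it is a step, not a triviality.
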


HLCSP generalizes hyperbolic minesweeper. We have two colors (no mine and mine), and for every $v \in V$ containing a number $k$, the constraint $m(v)$ says
that $v$ contains no mine itself, and exactly $k$ vertices in $N(v)$ contain a mine.

To prove Theorem \ref{hlcsp}, it is enough to prove that the following problem (HECSP) is in P.

\begin{theorem}
HLCSP reduces to the Hyperbolic Edge Constraint Satisfaction Problem (HECSP) given as follows:

{\bf INPUT:}
\begin{itemize}
\item a $(r,d)$-hyperbolic graph $G = (V,E)$;
\item for every edge $e \in E$, $m(e)$, a subset of $K^e$.
\end{itemize}

{\bf OUTPUT:} Is there a coloring $c: V \rightarrow K$ such that for every $e \in E$, $c _{|e} \in m(e)$?

(Note: this reduction changes the set of admissible colors $K$.)
\end{theorem}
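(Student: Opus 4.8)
The plan is to encode, for each vertex $v$, the entire local coloring of its neighborhood $N(v)$ into a single new color, so that the vertex constraint $m(v)$, which originally speaks about all of $N(v)$, becomes something readable from the individual new colors and checkable edge-by-edge. The graph $G$ itself (and hence its $(r,d)$-hyperbolicity) is left completely unchanged by the reduction; only the color set and the constraints are rebuilt. The first step is to invoke the Remark: since every $(r,d)$-hyperbolic graph has degree bounded by a constant $\Delta = \Delta(r,d)$, the set of possible local colorings around any vertex has size at most $|K|^{\Delta+1}$. This lets me define a new, constant-size color set $K'$ whose elements encode a ``center'' color together with a sequence of at most $\Delta$ neighbor colors (padding unused slots with a dummy symbol). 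The intended meaning of $c'(v) \in K'$ is that it records $c_{|N(v)}$, with the neighbors listed according to a fixed local enumeration of $N(v)$ read off from $G$.

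Next I would design the edge constraint $m'(e)$ for each $e = \{v,w\} \in E$ to enforce two things. First, \emph{local validity}: $c'(v)$ must decode to an element of $m(v)$ and $c'(w)$ to an element of $m(w)$; these are really unary conditions on single new colors, which I fold onto any incident edge (possible as long as the vertex is non-isolated). Second, \emph{consistency}: using the indices $i,j$ determined by $G$ via $w = w_{v,i}$ and $v = w_{w,j}$, the constraint requires that the color $c'(v)$ assigns to its $i$-th neighbor equals the center color of $c'(w)$, and symmetrically that the color $c'(w)$ assigns to its $j$-th neighbor equals the center color of $c'(v)$. Because $K'$ is of constant size, producing all the $m'(e)$ takes polynomial time, so the output is a legitimate HECSP instance.

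The correctness argument then runs both ways. Given a valid $c'$, set $c(v)$ to be the center color of $c'(v)$; local validity gives $c'(v) = f_v$ for some $f_v \in m(v)$ with $f_v(v) = c(v)$, and the consistency condition on each edge $\{v,w\}$ forces $f_v(w) = c(w)$, whence $c_{|N(v)} = f_v \in m(v)$. Conversely, setting $c'(v) := c_{|N(v)}$ from a valid $c$ satisfies all edge constraints by construction.

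I expect the main obstacle to be conceptual rather than computational: verifying that enforcing agreement only across each single edge suffices to recover a globally consistent coloring. This works precisely because $w \in N(v) \iff v \in N(w)$, so the edge between them pins the two shared values $f_v(w)$ and $f_w(w)$ to the same $c(w)$; common neighbors of $v$ and $w$ need no separate handling, since each is itself pinned through its own incident edges. The only genuine special case is that of isolated vertices, whose constraint $m(v) \subseteq K^{\{v\}}$ is unary and carries no incident edge to absorb it; these I would dispatch in a trivial preprocessing step (reject if some such $m(v) = \emptyset$, otherwise ignore them).
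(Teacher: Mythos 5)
Your proposal is correct and follows essentially the same route as the paper: encode the chosen local coloring of $N(v)$ into the new color of $v$ (which is of constant size by the bounded-degree remark) and use edge constraints to enforce pairwise consistency. The only, harmless, differences are that you store the tuple $c_{|N(v)}$ directly rather than an index into $m(v)$, and you require agreement only at the two endpoints rather than on all of $N(v_1)\cap N(v_2)$ as the paper does --- which, as you argue, still suffices because every $u\in N(v)\setminus\{v\}$ is pinned to $c(u)$ through its own edge to $v$.
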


\begin{proof} Let $(G,m)$ be the instance of HLCSP. We will be coloring $V$ using colors $K' = \{1, \ldots, k\}$, where $k$ is the greatest number of
elements of $m(v)$ (since $(r,d)$-hyperbolic graphs have bounded degree and $K$ is fixed, $k$ is also bounded).
Enumerate every elements of $m(v)$ with one color from $K'$. For $e = \{v_1, v_2\}\in E$, $m(e)$ is the set of all colorings $c: \{v_1, v_2\} \rightarrow K$ which are
consistent, i.e., $k_1$ denotes $c_1 \in m(v_1)$ and $k_2$ denotes $c_2 \in m(v_2)$, and $c_1$ equals $c_2$ on all the vertices in $N(v_1) \cap N(v_2)$.
\end{proof}

\section{Proof of Theorem \ref{hlcsp}}

We will be using the following result \cite{robseym,grigplanar}:
\begin{theorem}\label{robsey}
Given a planar graph $G = (V,E)$ and a number $t$, it is possible to either find a $t \times t$ grid as a minor of $G$, or produce a tree decomposition
of $G$ of width $\leq 5t-6$, in time $O(n^2 \log(n))$, where $n = |V|$.
\end{theorem}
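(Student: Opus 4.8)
The plan is to prove this as a constructive ``win/win'' dichotomy driven by the branchwidth of $G$, exploiting the tight \emph{linear} relationship between branchwidth and grid minors that holds specifically for planar graphs. The two ingredients are (i) an efficient branchwidth oracle for planar graphs and (ii) the structural fact, due to Robertson, Seymour and Thomas, that a planar graph of large branchwidth contains a correspondingly large grid minor. I would fix a threshold $b = b(t)$, a linear function of $t$ to be tuned so the constants work out, and test whether $\mathrm{bw}(G) \leq b$.

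First I would compute branchwidth using the Seymour--Thomas ratcatcher, which decides $\mathrm{bw}(G) \leq k$ for planar $G$ in time $O(n^2)$. Since a planar graph on $n$ vertices has branchwidth $O(\sqrt n)$, a parametric search over $k$ costs only an extra $O(\log n)$ factor and yields an optimal branch decomposition of $G$; moreover this decomposition can be taken to be a \emph{sphere-cut decomposition}, i.e.\ every separation is realised by a noose---a closed curve in the plane meeting $G$ only in vertices. If the answer is ``$\mathrm{bw}(G) \leq b$'', I convert the branch decomposition to a tree decomposition via the standard inequality $\mathrm{tw}(G) \leq \lfloor \tfrac{3}{2}\mathrm{bw}(G)\rfloor - 1$; choosing $b$ appropriately makes the resulting width at most $5t-6$, and we output it. This half is routine once the branchwidth machinery is in place. (If $t$ itself exceeds the $O(\sqrt n)$ bound, then $t^2 > n$, no $t\times t$ grid minor can exist, and we always land in this branch.)

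The substantive half is the case $\mathrm{bw}(G) > b$, where I must exhibit a concrete $t \times t$ grid minor. High branchwidth certifies a tangle of high order, and in the planar embedding this tangle organises the graph around a system of deeply nested nooses. From these I would extract $t$ pairwise-disjoint concentric cycles---as Jordan curves the nooses are linearly ordered by nesting---together with $t$ radial paths, each crossing every concentric cycle exactly once and no two radial paths crossing each other; contracting appropriately turns this ``concentric-plus-radial'' system into the branch sets of a $t \times t$ grid. The quantitative heart is the RST ``quickly excluding a planar graph'' estimate, which guarantees that branchwidth linear in $t$ forces a nested and radial system of depth and multiplicity $t$.

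The hard part will be this last extraction: routing the radial paths so that they are simultaneously pairwise non-crossing and each meets every concentric cycle exactly once, while keeping the whole construction algorithmic and within the $O(n^2 \log n)$ budget. This is exactly the step that relies on planarity---the nooses being genuine Jordan curves with a well-defined nesting order is what lets the radial paths be combed into a grid---and it is also the step with no analogue for general graphs, where branchwidth linear in $t$ does not force a $t\times t$ grid. I would follow the Robertson--Seymour--Thomas and Grigoriev arguments here, tracking constants carefully to land on $5t-6$ and reusing the noose structure already produced by the ratcatcher so that no asymptotic cost is incurred beyond the branchwidth computation.
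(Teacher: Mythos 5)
The first thing to note is that the paper does not prove this statement at all: Theorem \ref{robsey} is imported verbatim from the literature (the citations are to Robertson--Seymour and to Grigoriev's algorithmic planar grid-minor theorem, which is where both the bound $5t-6$ and the running time $O(n^2 \log n)$ originate). So your proposal cannot be matched against a proof in the paper; it has to stand on its own as a reconstruction of those cited results. Your overall architecture --- decide branchwidth with the Seymour--Thomas ratcatcher, convert a branch decomposition into a tree decomposition via $\mathrm{tw}(G) \leq \lfloor \tfrac{3}{2}\mathrm{bw}(G)\rfloor - 1$, and in the high-branchwidth case invoke the planar ``large branchwidth forces a grid minor'' theorem --- is indeed the standard modern route to such a dichotomy, so the plan is reasonable in outline.

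However, as a proof it has two genuine gaps. First, the ratcatcher is a decision procedure: it answers whether $\mathrm{bw}(G) \leq k$ in $O(n^2)$ time, but it does not output a branch decomposition, and the known procedures for actually constructing an optimal (or sphere-cut) branch decomposition --- Seymour--Thomas edge contraction in $O(n^4)$, or Gu--Tamaki in $O(n^3)$ --- exceed your $O(n^2 \log n)$ budget. Your claim that the parametric search ``yields an optimal branch decomposition,'' and moreover one realised by nooses, is exactly the step that is neither justified nor free; this is where your time bound breaks as written. Second, the half you yourself call substantive --- extracting a $t \times t$ grid minor from high branchwidth, routing $t$ pairwise non-crossing radial paths across $t$ nested cycles, doing so algorithmically, and tracking constants to land on $5t-6$ --- is deferred wholesale to ``following Robertson--Seymour--Thomas and Grigoriev.'' That deferral is the entire quantitative content of the theorem; with it left open, what you have is a map of where the proof lives in the literature rather than a proof. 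Given that the paper itself treats this theorem as a black box, citing it is the honest and sufficient move; if you do want to prove it, the constructive decomposition step and the grid-extraction argument are precisely the two things you must supply rather than assume.
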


\begin{definition}A tree decomposition of width $w$ of a graph $G=(V,E)$ is $(V_T, E_T, X)$ where $(V_T,E_T)$ is a tree, and
$X:V_T \rightarrow P(V)$ is a mapping which assigns
a subset of $V$ of cardinality at most $w+1$ to every vertex in $V_T$, such that:
\begin{itemize}
\item For every $v \in V$, the set of vertices $b \in V_T$ such that $v \in X_b$ is connected,
\item For every $e \in E$, there exists a $b \in V_T$ such that $e \subseteq X_b$.
\end{itemize}
\end{definition}

We can assume that our tree is rooted in $r \in V_T$. For $b \in V_T$, let $X^+_b$ be the union of $X_{b'}$ for all $b'$ which are descendants of $b$.

\begin{lemma}\label{cases}
Without loss of generality we can assume that every $b \in V_T$ falls into one of the following cases:

\begin{itemize}
\item $b$ is a leaf and $|X_b|=1$,
\item $b$ has a single child $b'$ and $X_{b'} = X_{b} \cup \{v\}$,
\item $b$ has a single child $b'$ and $X_{b'} = X_{b} - \{v\}$,
\item $b$ has two children $b_1$ and $b_2$, $X_b = X_{b_1} = X_{b_2}$, and $X^+_{b_1} - X_b$ and $X^+_{b_2} - X_b$ are non-empty and disjoint.
\end{itemize}
\end{lemma}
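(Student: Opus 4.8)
The plan is to recognize the statement as the standard conversion of an arbitrary tree decomposition into a \emph{nice} tree decomposition: the four listed cases are precisely leaf, forget, introduce, and join nodes. Starting from the decomposition of width $w\le 5t-6$ produced by Theorem~\ref{robsey} and rooted at $r$, I would refine it by a sequence of local surgeries that never raise the maximum bag size above $w+1$, so that both the width bound and a polynomial running time are preserved. Throughout, the invariant to maintain is the pair of tree-decomposition axioms: connectivity of each vertex's occurrence set, and coverage of each edge by some bag.

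First I would make the tree binary: a node $b$ with children $d_1,\dots,d_k$ and $k\ge 3$ is replaced by a path of copies $e_1,\dots,e_{k-1}$, all carrying the bag $X_b$, where $e_i$ has children $d_i$ and $e_{i+1}$. Next, for any node with two children I insert between it and each child $b_j$ a fresh copy $c_j$ with $X_{c_j}=X_b$, turning $b$ into a join node with $X_b=X_{c_1}=X_{c_2}$. Finally, along each remaining parent--child edge whose bags $T$ (parent) and $S$ (child) differ, I replace the edge by a path that, descending from the parent, first deletes the vertices of $T\setminus S$ one at a time (each step realizing the third listed case), passes through the bag $T\cap S$, and then adds the vertices of $S\setminus T$ one at a time (each step realizing the second listed case); since every intermediate bag is a subset of $T$ or of $S$, the width bound is untouched. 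I also contract any edge between two nodes with equal bags, and hang below each leaf of bag size $>1$ a descending chain removing one vertex per step, terminating in a singleton (the first listed case).

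The verification I expect to carry out then has three parts. Coverage is immediate because every original bag survives as a node and contractions only merge equal bags. The connectivity axiom is the delicate point: for each inserted chain or copy I must check that every vertex still occupies a connected subtree, which rests on the observation that a vertex lying in some bag $X_b$ cannot reappear strictly below a child $b'$ of $b$ with $v\notin X_{b'}$, so the newly created occurrences always extend an existing subtree rather than split it. For join nodes the disjointness of $X^+_{b_1}-X_b$ and $X^+_{b_2}-X_b$ then comes for free: a common vertex would occur in both subtrees and hence, by connectivity, already lie in $X_b$.

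I expect the main obstacle to be the non-emptiness clause of the join case, which the surgeries above do not by themselves guarantee. I would secure it with a pruning step: whenever a child subtree $b_j$ satisfies $X^+_{b_j}\subseteq X_b$, its vertices and edges are all covered by $X_b$, so the whole subtree can be deleted without violating either axiom; after this cleanup every surviving branch of a join node contributes a vertex outside $X_b$, and any join node left with a single child is simply reclassified and chain-refined. A final bookkeeping check shows each original edge expands into a chain of length at most $2(w+1)$ and binarization adds only $O(1)$ nodes per child, so the decomposition grows by a factor depending only on the (constant) width, and every surgery is clearly computable in polynomial time.
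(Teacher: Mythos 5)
The paper states this lemma without proof---it is the standard normalization to a \emph{nice} tree decomposition (leaf/introduce/forget/join nodes), so there is no in-paper argument to compare against; your construction is exactly that standard argument and it is correct, including the two points that are easiest to overlook (connectivity of each vertex's occurrence set under the inserted chains, and the pruning of children with $X^+_{b_j}\subseteq X_b$ to secure the non-emptiness clause, with disjointness then following from connectivity). The only bookkeeping caveat is that the contraction of equal-bag edges must be restricted to single-child edges (or performed before the join-node copies are inserted), since applied indiscriminately it would undo the join structure you just built.
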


\begin{theorem}\label{nogrid}
If a $(r,d)$-hyperbolic graph $G$ contains a $t \times t$ grid as a minor, then $t = O(\log(V(G))$.
\end{theorem}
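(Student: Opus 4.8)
The plan is to show that a $t\times t$ grid minor forces the host graph to contain $2^{\Omega(t)}$ vertices; since $|V(G)|\ge 2^{\Omega(t)}$ immediately gives $t=O(\log|V(G)|)$, this suffices. The engine is the hyperbolic isoperimetric inequality: a region of $\bbH^2$ bounded by a closed curve of length $L$ has area at most $L$ (from $L^2\ge 4\pi A+A^2$). This is exactly the point where genuine hyperbolicity, rather than mere planarity, enters — in the Euclidean plane one only gets $A=O(L^2)$, which would yield the far weaker bound $t=O(\sqrt{|V(G)|})$.

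First I would fix a minor model: pairwise disjoint connected branch sets $B_{ij}\subseteq V$ (for $1\le i,j\le t$) with an edge of $G$ between $B_{ij}$ and $B_{i'j'}$ whenever $(i,j)$ and $(i',j')$ are grid-adjacent. Write $n_{ij}=|B_{ij}|$. Centering on $(c,c)$ with $c=\lfloor t/2\rfloor$, consider the nested axis-aligned square subgrids $D_k=[c-k,c+k]^2$ and the concentric weights $M_k=\sum_{(i,j)\in D_k} n_{ij}$, so that $M_{\lfloor t/2\rfloor}\le |V(G)|$. The boundary ring $R_k=D_k\setminus D_{k-1}$ is a genuine cycle of the grid, hence corresponds to a closed walk $\Gamma_k$ in $G$: for each ring branch set follow a spanning-tree tour (of length $<2n_{ij}d$, since every edge is drawn with hyperbolic length $<d$) and join consecutive branch sets along the $\Theta(k)$ connector edges. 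Thus $\Gamma_k$ is a closed curve in $\bbH^2$ of length at most $2d\,(M_k-M_{k-1})+O(k)$.

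Next I would argue that $\Gamma_k$ encloses the images of all branch sets of $D_{k-1}$. The interior vertices are pairwise more than $r$ apart, so their disjoint radius-$r/2$ disks (which, away from a thin boundary layer, lie inside $\Gamma_k$) contribute area at least $\big(M_{k-1}-O(k)\big)\,\area(r/2)$. Combining with the isoperimetric bound ``area $\le$ length'' gives $(M_{k-1}-O(k))\,\area(r/2)\le 2d\,(M_k-M_{k-1})+O(k)$, which rearranges to a multiplicative recursion $M_k\ge (1+\beta)\,M_{k-1}-O(k)$, where $\beta=\area(r/2)/(2d)>0$ is a constant depending only on $r,d$. Since trivially $M_{k-1}\ge |D_{k-1}|=\Theta(k^2)$ dominates the additive $O(k)$ term, the recursion yields $M_k\ge 2^{\Omega(k)}$; taking $k=\lfloor t/2\rfloor$ gives $|V(G)|\ge M_{\lfloor t/2\rfloor}\ge 2^{\Omega(t)}$, that is, $t=O(\log|V(G)|)$.

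I expect the main obstacle to be the topological step of justifying that the ring curve $\Gamma_k$ really does enclose the interior branch sets in the hyperbolic drawing. Because $G$ is drawn as a plane graph (edges are non-crossing geodesic segments), the cycle $R_k$ maps to a closed curve and Jordan's theorem splits $\bbH^2$ into a bounded and an unbounded side; the content is that every interior branch set lands on the bounded side. This ought to follow from the fact that $D_k$ contains the $2$-connected ``ring-plus-interior'' grid as a minor, so that in any plane embedding the interior is separated from the outer face by $R_k$ — but making this precise, together with the bookkeeping for the thin boundary layer whose disks may protrude across $\Gamma_k$, is the delicate part. Everything else (the spanning-tour length bound, the isoperimetric inequality, and solving the recursion) should be routine.
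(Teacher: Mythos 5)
Your proposal is correct in outline, but it takes a genuinely different route from the paper. The paper fixes a centre vertex $v_c$ and proves (its Lemma~\ref{dlemma}) that the $i$-th nested grid cycle must be \emph{drawn} at hyperbolic distance $\Omega(i)$ from $v_c$ --- an inductive ``exclusion zone'' argument using the $r/2$ clearance condition and the divergence of parallels, with increment $h=\Theta(re^{-d})$ per cycle --- and then invokes the exponential growth of $\perimeter$ to get $|C_i|\geq \perimeter(\Omega(i))/d$, summing to $|V|\geq e^{\Omega(t)}$. You never measure distances from a centre at all: you apply the linear isoperimetric inequality of $\bbH^2$ (enclosed area at most boundary length) to the closed walk tracing each ring of branch sets, lower-bound the enclosed area by $M_{k-1}\cdot\area(r/2)$ using the disjoint $r/2$-disks, and extract the multiplicative recursion $M_k\geq(1+\beta)M_{k-1}-O(k)$. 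Your version buys two things: it replaces the paper's somewhat delicate geometric lemma by a soft, standard inequality, and it handles the minor model explicitly (branch sets of arbitrary size, spanning-tree tours), which the paper elides by speaking of cycles ``in the graph''; the paper's version buys the more informative per-cycle bound $|C_i|=e^{\Omega(i)}$ and a more visual picture. Two small remarks: your $O(k)$ ``boundary layer'' correction on the area side is unnecessary, since the third condition in the definition of an $(r,d)$-hyperbolic graph already forces the walk to stay at distance at least $r/2$ from every vertex not in a ring branch set, so all $M_{k-1}$ disks lie cleanly inside; and since the walk need not be a simple curve, you should apply the isoperimetric inequality to the union of bounded complementary components (whose boundary is contained in the walk and hence has length at most your bound). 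The topological step you flag --- that the ring walk really encloses the interior branch sets in the given drawing --- is a genuine obligation, but it is precisely the step the paper also asserts without proof (``each of which surrounds $v_c$ and the preceding cycles''), so it does not put your argument at a disadvantage.
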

\begin{proof}
Without loss of generality we can assume $t=2k+1$. Such a $t \times t$ grid contains $k$ cycles around the center of the grid.
Let $v_c \in V$ be the vertex corresponding to this center. We have $k$ cycles $C_1, \ldots, C_k$ in the graph $V$, 
each of which surrounds $v_c$ and the preceding cycles. We use the following lemma:
\begin{lemma}\label{dlemma}
Every point in the drawing of cycle $C_i$ is in distance $\Omega(i)$ from $v$.
(The drawing is the polygon obtained as a union of the edges embedded in $\bbH^2$.)
\end{lemma}
Therefore, $|C_i| \geq \perimeter(\Omega(i))/d$. Since $\perimeter$ grows
exponentially, we get $k = O\log(|V|)$.
\end{proof}

\begin{figure}[t]
\begin{center}
\begingroup%
  \makeatletter%
  \providecommand\color[2][]{%
    \errmessage{(Inkscape) Color is used for the text in Inkscape, but the package 'color.sty' is not loaded}%
    \renewcommand\color[2][]{}%
  }%
  \providecommand\transparent[1]{%
    \errmessage{(Inkscape) Transparency is used (non-zero) for the text in Inkscape, but the package 'transparent.sty' is not loaded}%
    \renewcommand\transparent[1]{}%
  }%
  \providecommand\rotatebox[2]{#2}%
  \newcommand*\fsize{\dimexpr\f@size pt\relax}%
  \newcommand*\lineheight[1]{\fontsize{\fsize}{#1\fsize}\selectfont}%
  \ifx\svgwidth\undefined%
    \setlength{\unitlength}{121.5bp}%
    \ifx\svgscale\undefined%
      \relax%
    \else%
      \setlength{\unitlength}{\unitlength * \real{\svgscale}}%
    \fi%
  \else%
    \setlength{\unitlength}{\svgwidth}%
  \fi%
  \global\let\svgwidth\undefined%
  \global\let\svgscale\undefined%
  \makeatother%
  \begin{picture}(1,0.90075957)%
    \lineheight{1}%
    \setlength\tabcolsep{0pt}%
    \put(0,0){\includegraphics[width=\unitlength,page=1]{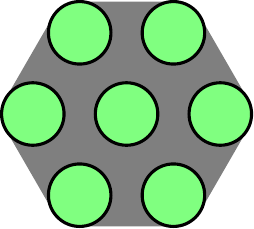}}%
    \put(0.5,0.41334275){\color[rgb]{0,0,0}\makebox(0,0)[t]{\lineheight{1.25}\smash{\begin{tabular}[t]{c}$v_c$\end{tabular}}}}%
    \put(0,0){\includegraphics[width=\unitlength,page=2]{grow.pdf}}%
    \put(0.52897135,0.82132523){\color[rgb]{0,0,0}\makebox(0,0)[t]{\lineheight{1.25}\smash{\begin{tabular}[t]{c}$h$\end{tabular}}}}%
  \end{picture}%
\endgroup%

\end{center}
\caption{\label{xfig}Cycles around $v_c$}
\end{figure}

\begin{proof}[Proof of Lemma \ref{dlemma}]
We will show that if every point in the drawing of $C_i$ is in distance at least $x$ from $v$, then
every point in the drawing of $C_{i+1}$ is in distance $x+c$ from $v$. By induction, this is enough
to prove Lemma \ref{dlemma}.

Figure \ref{xfig} shows what happens for an Euclidean graph.
(We define Euclidean graph just like hyperbolic graph except the differing underlying geometry.) 
In this picture, the drawing of $C_i$ is a hexagon; 
cycle $C_i$ has six vertices around $v_c$. For each of these seven vertices the exclusion zone of radius $r/2$ around it is shown (in green).
It is clear from the picture (and the definition of Euclidean graph) that no point in the drawing of $C_{i+1}$ may fall into the gray/green region. 
All points in distance $\leq r/2$ from $C_i$
fall in the gray/green region, thus our claim is true for $c=h=r/2$.

In the hyperbolic plane the situation is slightly different. Because parallel lines work differently in the hyperbolic plane (they ``diverge''),
we have $c = h < r/2$, where $h$ depends on $r$ and $d$ (we have $h = \Theta(re^{-d})$ for large values of $d$). Still, our claim holds with $c>0$.
\end{proof}

\begin{corollary}\label{tdec}
Given a $(r,d)$-hyperbolic graph $G = (V,E)$, it is possible to find a tree decomposition of width $O(\log |V|)$ in polynomial time.
\end{corollary}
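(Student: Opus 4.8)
The plan is to read the statement in the only way that makes it true, and then to combine Theorem~\ref{robsey} with Theorem~\ref{nogrid}. A word of caution on the wording first: for an \emph{arbitrary} planar graph the bound is false, since the $\sqrt{n}\times\sqrt{n}$ grid is planar on $n$ vertices yet has treewidth $\Theta(\sqrt{n})$. The statement must therefore be understood within this section, where $G$ is the $(r,d)$-hyperbolic graph supplied to Theorem~\ref{hlcsp}. Such a $G$ is planar: drawing each edge as the straight segment between the images of its endpoints under the embedding $m$ yields, by the non-crossing clause in the definition of an $(r,d)$-hyperbolic graph, a plane drawing. It is to this planar graph that Theorem~\ref{robsey} will be applied.

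Given this, the proof is immediate. By Theorem~\ref{nogrid} there is a constant $c=c(r,d)$ such that every $t\times t$ grid minor of $G$ satisfies $t\le c\log|V|$; equivalently, $G$ has no $t\times t$ grid minor once $t>c\log|V|$. I would set the threshold $t:=\lceil c\log|V|\rceil+1$ and run the algorithm of Theorem~\ref{robsey} on $G$ with this $t$. Its grid-minor output is ruled out, because it would exhibit a forbidden $t\times t$ grid minor; hence the algorithm is forced into its other branch and returns a tree decomposition of $G$ of width at most $5t-6=O(\log|V|)$, which is exactly the claim.

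For the running time, Theorem~\ref{robsey} runs in $O(n^2\log n)$ with $n=|V|$, and computing the single integer $t$ is negligible, so the whole procedure is polynomial. I expect the only real work to be bookkeeping rather than mathematics: I must make the constant $c$ in Theorem~\ref{nogrid} explicit — it is fixed by $r$ and $d$ through the exponential growth of $\perimeter$ — so that $t$ is genuinely computable and so that the $+1$ really does exclude every grid minor the algorithm might find (any additive slack hidden in the $O(\cdot)$ of Theorem~\ref{nogrid} can be absorbed into $c$ for large $|V|$, with the finitely many small instances handled directly). It also remains to confirm that the dichotomy of Theorem~\ref{robsey} is exhaustive, i.e.\ that failing to find the grid forces the width-$(5t-6)$ decomposition; this is precisely what its statement guarantees.
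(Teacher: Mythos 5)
Your proof is correct and takes essentially the same route as the paper: the paper likewise uses Theorem~\ref{nogrid} to pick $t = O(\log|V|)$ exceeding any possible grid-minor size, and then invokes the dichotomy of Theorem~\ref{robsey} to force the algorithm into producing a tree decomposition of width $5t-6 = O(\log|V|)$. Your preliminary caveat is also well taken --- the corollary as worded (for arbitrary planar graphs) would be false because of the $\sqrt{n}\times\sqrt{n}$ grid, and the statement must indeed be read as applying to the $(r,d)$-hyperbolic graph at hand, which is exactly how the paper uses it.
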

\begin{proof}
From Theorem \ref{nogrid} choose $t = O(\log |V|)$ such that $G$ does not contain a $k \times k$ grid. From Theorem \ref{robsey}, it is possible to find a tree decomposition of
width $5t-6 = O(\log |V|)$.
\end{proof}

\begin{corollary}
HECSP (and thus HLCSP) can be solved in polynomial time.
\end{corollary}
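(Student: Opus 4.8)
The plan is to combine the logarithmic treewidth bound of Corollary \ref{tdec} with a standard bottom-up dynamic programming over a tree decomposition, exploiting the fact that a table indexed by bag colorings has size $|K|^{w+1}$, which is polynomial once $w = O(\log|V|)$ and $K$ is fixed.

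First I would note that every $(r,d)$-hyperbolic graph $G$ is planar: by definition it admits an embedding into $\bbH^2$ (homeomorphic to $\mathbb{R}^2$) in which the edges are drawn as non-crossing segments. Hence Corollary \ref{tdec} applies and yields, in polynomial time, a tree decomposition of width $w = O(\log|V|)$. Using Lemma \ref{cases} I would refine it into a rooted nice tree decomposition whose nodes are of the four listed types (leaf, forget, introduce, join), without increasing the width beyond $w$ and while keeping the number of nodes polynomial in $|V|$.

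Next comes the dynamic programming. For each node $b \in V_T$ I would maintain a boolean table $T_b$ indexed by colorings $f \colon X_b \to K$, with the invariant that $T_b[f]$ is true exactly when $f$ extends to some $c \colon X^+_b \to K$ satisfying $c|_e \in m(e)$ for every edge $e \subseteq X^+_b$. The recurrences follow the node types. At a leaf, every singleton coloring is admissible. At a forget node ($X_{b'} = X_b \cup \{v\}$) I existentially quantify over the color of $v$, setting $T_b[f] = \bigvee_{\kappa \in K} T_{b'}[f \cup \{v \mapsto \kappa\}]$. At an introduce node ($X_{b'} = X_b - \{v\}$) I set $T_b[f]$ to $T_{b'}[f|_{X_{b'}}]$ conjoined with the constraints $f|_{\{v,u\}} \in m(\{v,u\})$ ranging over all edges between $v$ and the remaining bag vertices $u$. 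At a join node I set $T_b[f] = T_{b_1}[f] \wedge T_{b_2}[f]$. A short argument using the connectivity property of tree decompositions shows that each edge constraint gets checked at least once and only with the correct restriction of $c$, so the invariant is preserved; the answer to HECSP is then $\bigvee_{f} T_r[f]$, since $X^+_r = V$.

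Finally, for the complexity bound, each table has at most $|K|^{|X_b|} \le |K|^{w+1}$ entries, and each entry is computed from the children's tables in time polynomial in the bag size and the (constant) maximum degree guaranteed by the Remark. Because $w = O(\log|V|)$ and $K$ is fixed, $|K|^{w+1} = |V|^{O(\log|K|)} = \mathrm{poly}(|V|)$, so the whole computation is polynomial. The main conceptual obstacle is already behind us: it is the logarithmic treewidth bound of Theorem \ref{nogrid}, derived from the exponential growth of $\bbH^2$, which is precisely what converts the otherwise exponential factor $|K|^{w+1}$ into a polynomial; the dynamic programming itself is routine. Since HLCSP reduces to HECSP, this establishes that HLCSP, and hence hyperbolic Minesweeper, is in P.
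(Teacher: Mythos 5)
Your proposal is correct and follows essentially the same route as the paper: apply Corollary \ref{tdec} to get a tree decomposition of width $w = O(\log|V|)$, then run the standard bottom-up dynamic programming over the nice decomposition of Lemma \ref{cases}, with tables of size $|K|^{w+1} = \mathrm{poly}(|V|)$. You merely spell out the recurrences and the invariant in more detail than the paper does.
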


\begin{proof}
From Corollary \ref{tdec} we can find a tree decomposition $(V_T,E_T,B)$ of width $w = O(\log |V|)$. Then we use Dynamic Programming over $(V_T, E_T)$.
For every $b \in V_T$, we compute $s(w)$, the set of all possible colorings $c: X_b \in K$ such that there exists a coloring $c: X^+_b$ which extends $c$
and which satisfies all the constraints on edges in $X^+_b$. This can be computed straightforwardly in every case from Lemma \ref{cases}.
The whole algorithm works in $O(|V| \cdot |K|^w)$, which is polynomial in $|V|$.
\end{proof}

\begin{figure}[b]
\begin{center}
\subfig{0.23\linewidth}{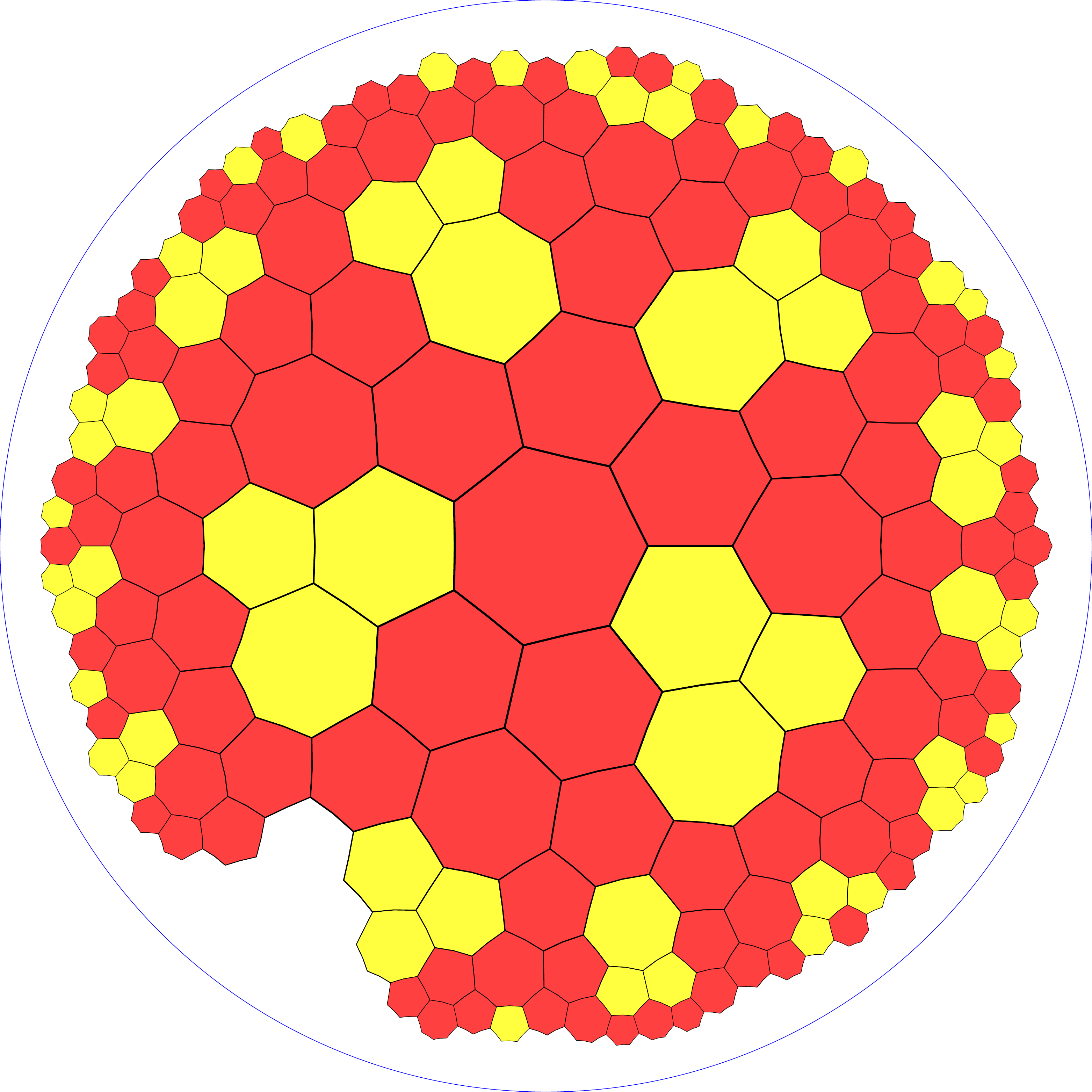}{two yellow}
\subfig{0.23\linewidth}{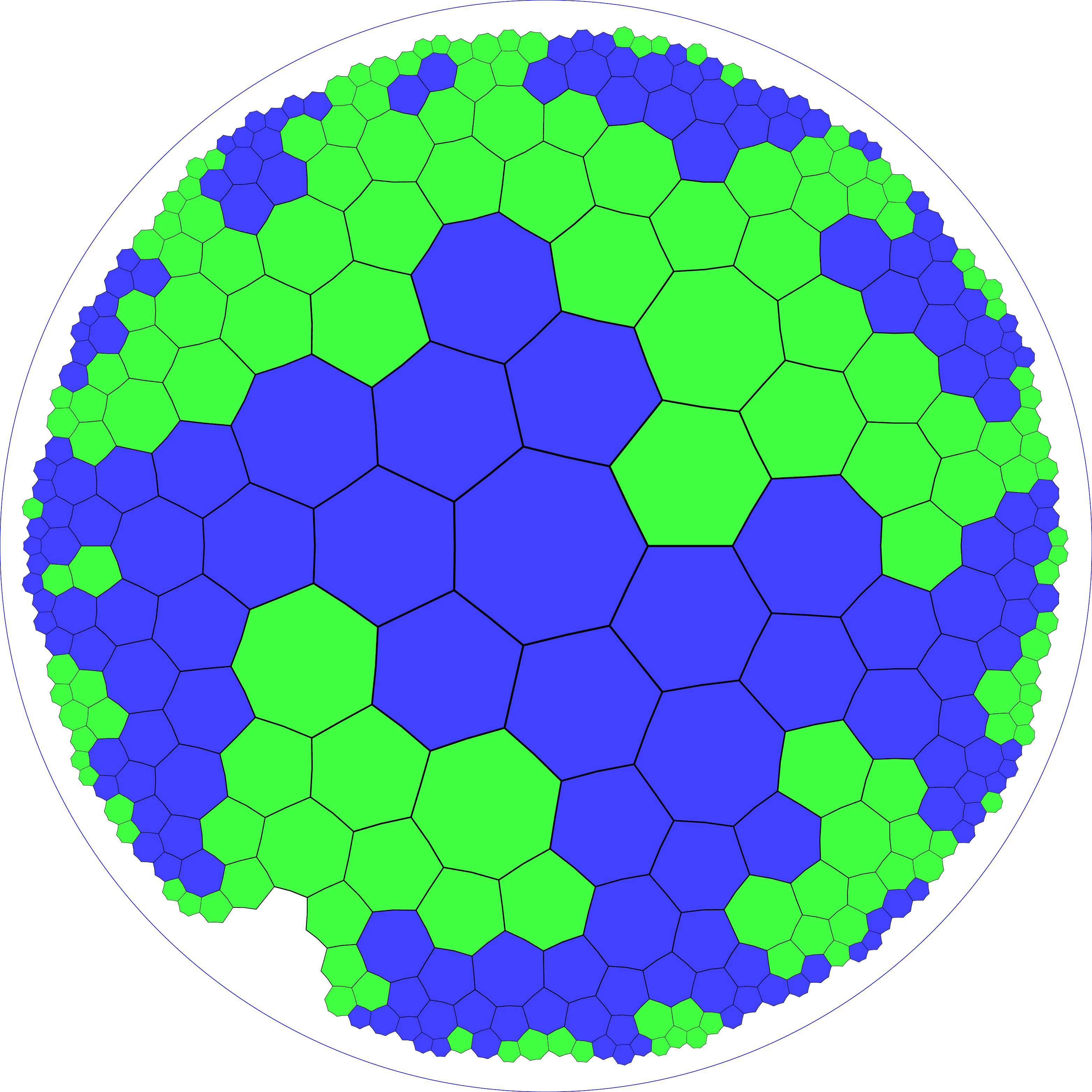}{one group}
\subfig{0.23\linewidth}{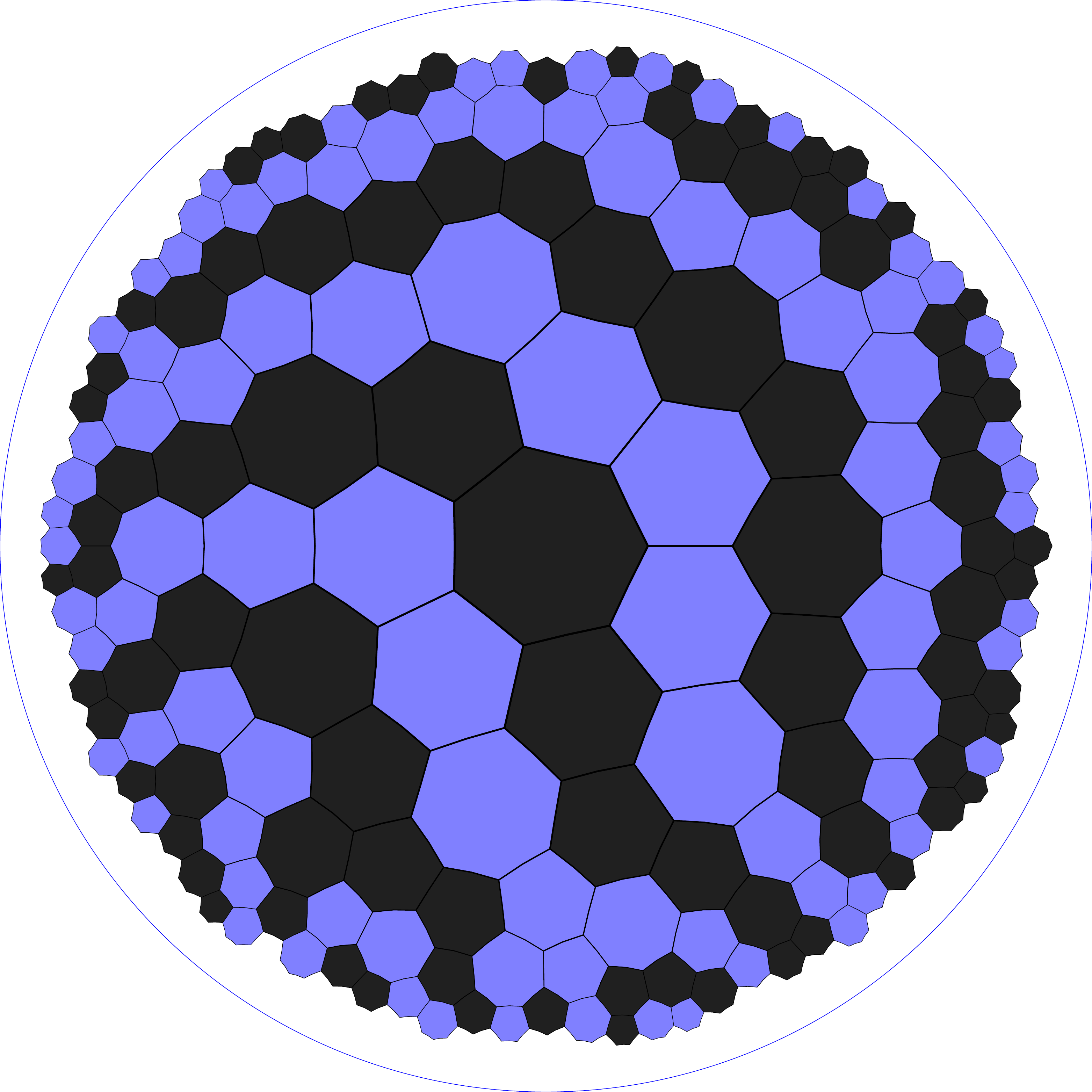}{two groups}
\subfig{0.23\linewidth}{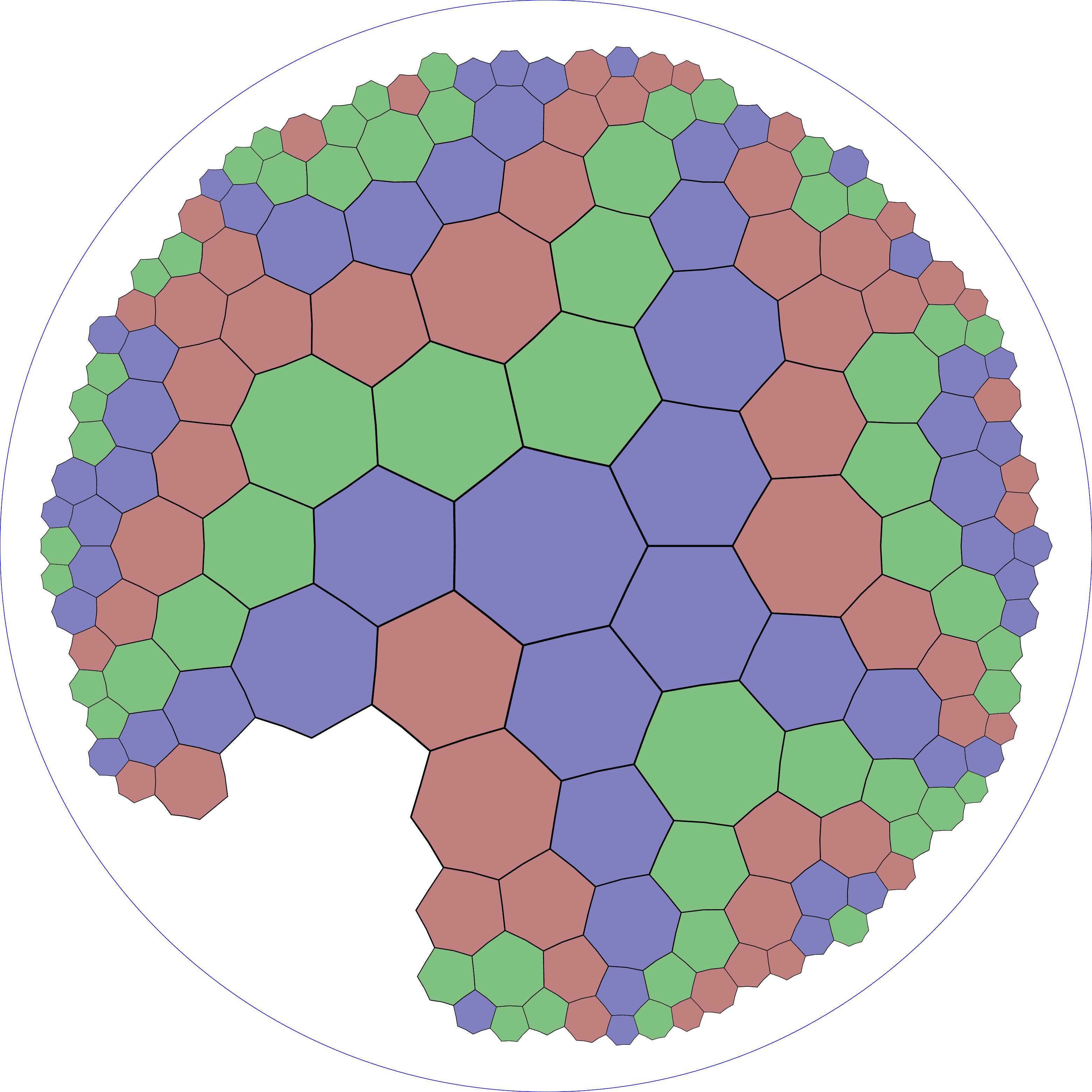}{four groups}
\end{center}
\caption{\label{consgen} Random colorings satisfying various constraints.
Since the degree of our polynomial is quite high, these pictures took about a minute to make.
Some cells have been removed from the full disk to reduce the treewidth. 
}
\end{figure}

\section{Conclusion}
We have shown that Minesweeper on hyperbolic tessellations can be solved in polynomial time. Our method is general: it works for any $(r,d)$-hyperbolic graph,
and for any problem based on satisfying local constraints. Other than solving puzzles, this may be applied to procedural content generation. For example, the Wave
Function Collapse (WFC) algorithm \cite{wfc,wfc_karth} is used in procedurally generated games to procedurally generate maps satisfying local constraints (e.g., a mountain
should not appear close to ocean, or roads should not branch nor end abruptly). In general finding out whether such constraints can be satisfied is NP-complete
(although this does not happen in typical PCG applications). Our algorithm can be adjusted to count the number of satisfying colorings, and to 
produce one of them, randomly chosen (Figure \ref{consgen}). 

Our results do not generalize to the three-dimensional hyperbolic space $\bbH^3$. This is because 
the Euclidean plane can be isometrically embedded in the three-dimensional hyperbolic space $\bbH^3$ as a horosphere.
Minesweeper played on a tiling which tessellates such a horosphere into squares is NP-complete.

HyperRogue also lets the player play Minesweeper in bounded hyperbolic manifolds, such as the Klein quartic or other Hurwitz manifolds. Hurwitz manifolds 
are quotient spaces of $\bbH^2$ which can be tiled with regular heptagons with angles $120^\circ$ (see Figure \ref{hrhex}b), and that are maximally symmetric,
i.e., there exists an isometry of the Hurwitz manifold $M$ into itself which map any heptagon with any orientation into any heptagon with any orientation.
Tessellations of quotient space are no longer planar, thus Theorem \ref{robsey} nor our proof of Theorem \ref{nogrid} is
no longer valid in quotient spaces of $\bbH^2$. Therefore, the complexity of Minesweeper on such manifolds does not follow from our results.
There are less symmetric hyperbolic manifolds into which the Euclidean square grid, or even higher-dimensional Euclidean grids, can be embedded \cite{hgeom},
and thus Minesweeper on them is NP-complete; however, highly symmetric manifolds have a more restricted structure.

\def\ext#1{#1}
\bibliography{minesweep}

\begin{thebibliography}{10}

\bibitem{gromovhyp}
Sergio Bermudo, José~M. Rodríguez, José~M. Sigarreta, and Jean-Marie
  Vilaire.
\newblock Gromov hyperbolic graphs.
\newblock {\em Discrete Mathematics}, 313(15):1575 -- 1585, 2013.
\newblock URL:
  \url{http://www.sciencedirect.com/science/article/pii/S0012365X1300174X},
  \href {https://doi.org/http://doi.org/10.1016/j.disc.2013.04.009}
  {\path{doi:http://doi.org/10.1016/j.disc.2013.04.009}}.

\bibitem{cannon}
James~W. Cannon, William~J. Floyd, Richard Kenyon, Walter, and R.~Parry.
\newblock Hyperbolic geometry.
\newblock In {\em In Flavors of geometry}, pages 59--115. University Press,
  1997.
\newblock Available online at
  \url{http://www.msri.org/communications/books/Book31/files/cannon.pdf}.

\bibitem{grigplanar}
Alexander Grigoriev.
\newblock Tree-width and large grid minors in planar graphs.
\newblock {\em Discrete Mathematics \& Theoretical Computer Science},
  13:13--20, 01 2011.

\bibitem{wfc}
Maxim Gumin.
\newblock {WaveFunctionCollapse}, 2017.
\newblock \url{https://github.com/mxgmn/WaveFunctionCollapse}.

\bibitem{wfc_karth}
Isaac Karth and Adam~M. Smith.
\newblock {WaveFunctionCollapse is Constraint Solving in the Wild}.
\newblock In {\em Proceedings of the 12th International Conference on the
  Foundations of Digital Games}, FDG ’17, New York, NY, USA, 2017.
  Association for Computing Machinery.
\newblock \href {https://doi.org/10.1145/3102071.3110566}
  {\path{doi:10.1145/3102071.3110566}}.

\bibitem{minenp}
Richard Kaye.
\newblock Minesweeper is np-complete.
\newblock {\em The Mathematical Intelligencer}, 22(2):9--15, Mar 2000.
\newblock \href {https://doi.org/10.1007/BF03025367}
  {\path{doi:10.1007/BF03025367}}.

\bibitem{hyperrogue}
Eryk Kopczy\'{n}ski, Dorota Celi\'{n}ska, and Marek \v{C}trn\'{a}ct.
\newblock Hyperrogue: Playing with hyperbolic geometry.
\newblock In \ext{David Swart and Carlo H. S\'equin and Krist\'of Fenyvesi},
  editor, {\em Proceedings of Bridges 2017\ext{: Mathematics, Art, Music,
  Architecture, Education, Culture}}, pages 9--16, Phoenix, Arizona, 2017.
  Tessellations Publishing.
\newblock \ext{Available online at
  \url{http://archive.bridgesmathart.org/2017/bridges2017-9.pdf}}.

\bibitem{lampingrao}
John Lamping, Ramana Rao, and Peter Pirolli.
\newblock A focus+context technique based on hyperbolic geometry for
  visualizing large hierarchies.
\newblock In {\em Proceedings of the SIGCHI Conference on Human Factors in
  Computing Systems}, CHI '95, pages 401--408, New York, NY, USA, 1995. ACM
  Press/Addison-Wesley Publishing Co.
\newblock URL: \url{http://dx.doi.org/10.1145/223904.223956}, \href
  {https://doi.org/10.1145/223904.223956} {\path{doi:10.1145/223904.223956}}.

\bibitem{munzner}
Tamara Munzner.
\newblock Exploring large graphs in 3d hyperbolic space.
\newblock {\em {IEEE} Computer Graphics and Applications}, 18(4):18--23, 1998.
\newblock URL: \url{http://dx.doi.org/10.1109/38.689657}, \href
  {https://doi.org/10.1109/38.689657} {\path{doi:10.1109/38.689657}}.

\bibitem{papa}
Fragkiskos Papadopoulos, Maksim Kitsak, M.~Angeles Serrano, Marian Bogu\~n\'a,
  and Dmitri Krioukov.
\newblock {Popularity versus Similarity in Growing Networks}.
\newblock {\em Nature}, 489:537--540, Sep 2012.

\bibitem{robseym}
N.~Robertson, P.~Seymour, and R.~Thomas.
\newblock Quickly excluding a planar graph.
\newblock {\em Journal of Combinatorial Theory, Series B}, 62(2):323 -- 348,
  1994.
\newblock URL:
  \url{http://www.sciencedirect.com/science/article/pii/S0095895684710732},
  \href {https://doi.org/https://doi.org/10.1006/jctb.1994.1073}
  {\path{doi:https://doi.org/10.1006/jctb.1994.1073}}.

\bibitem{hgeom}
Zeno Rogue.
\newblock {HyperRogue: Experiments with Geometry}, 2020.
\newblock \url{http://www.roguetemple.com/z/hyper/geoms.php}.

\bibitem{hypersweeper}
{Sci-Tech Binary Ltd. Co}.
\newblock Warped mines, 2019.
\newblock \url{https://apps.apple.com/br/app/hypersweeper/id1450066199}.

\end{thebibliography}
\end{document}